\newcommand{\RR}{\mathbb{R}}
\newcommand{\one}{\mathbf{1}_{\RR_+}}
\newcommand{\cO}{\mathcal{O}}
\DeclareMathOperator{\sgn}{sgn}
\newtheorem{theorem}{Theorem}
\title[Eigenvalues of star graphs]{Variational proof of the existence of~eigenvalues for star graphs}
\author{Konstantin Pankrashkin}
\email{konstantin.pankrashkin@math.u-psud.fr}
\urladdr{http://www.math.u-psud.fr/~pankrash/}
\address{Laboratoire de Math\'ematiques d'Orsay, Univ.~Paris-Sud, CNRS, Universit\'e\\ Paris-Saclay, 91405 Orsay, France}
\dedicatory{Dedicated to Pavel Exner on the occasion of his 70th birthday}
\begin{document}

\begin{abstract}
We provide a purely variational proof of the existence
of eigenvalues below the bottom of the essential spectrum
for the Schr\"odinger operator with an attractive $\delta$-potential
supported by a star graph, i.e. by a finite union of rays emanating
from the same point. In contrast to the previous works, the construction is valid without
any additional assumption on the number or the relative position
of the rays. The approach is used to obtain an upper bound for the lowest eigenvalue.
\end{abstract}

\maketitle

\section{Introduction}

The mathematically rigorous study of multidimensional
Schr\"odinger operators with potentials supported by hypersurfaces
was initiated in 1994 by Brasche, Exner, Kuperin and \v Seba \cite{BEKS}.
The two-dimensional Hamiltonians with interactions supported
by curves have become a prominent class of solvable models
of quantum mechanics \cite{ex16} and are usually referred to as leaky quantum graphs.
A summary of various questions and results in the spectral theory
of such operators can be found in the review by Exner \cite{ex-aga},
and for the most recent developments we refer to the papers \cite{bel,dr,ek,ep,ev,kl,lotor}
and to Chapter~10 in the recent monograph by Exner and  Kova\v r\'\i k~\cite{ekbook}.

In the present contribution, we are interested in some properties
of Schr\"odin\-ger operators with $\delta$-interactions supported by
the so-called star graphs.
By a star graph $\Gamma$ we mean a subset of $\RR^2$ obtained as the union of finitely many rays
emanating from the origin. If $(r,\theta)$ is the standard
polar coordinate system, then $\Gamma$ is naturally identified
with a family $(\theta_1,\dots,\theta_N)$ in which $0\le \theta_1<\dots<\theta_N<2\pi$
by $\Gamma:=\bigcup_{j=1}^N \big\{(r,\theta):\theta=\theta_j,\, r\ge 0\big\}$.
The associated Schr\"odinger operator $H_{\Gamma,\alpha}=-\Delta-\alpha \delta_\Gamma$, where $\delta_\Gamma$
is the Dirac $\delta$-distribution supported by $\Gamma$
and $\alpha>0$ is a coupling constant, is defined as the unique
self-adjoint operator in $L^2(\RR^2)$ associated with the closed lower semibounded quadratic form
\[
Q_{\Gamma,\alpha}(u)=\iint_{\RR^2} |\nabla u|^2dx - \alpha \int_\Gamma |u|^2 ds,
\quad u\in H^1(\RR^2).
\]
where $ds$ is the one-dimensional Hausdorff measure, cf.~\cite{BEKS}.
Such configurations appear naturally as a mathematical model for a junction
of quantum wires, and they were first analyzed by Exner and N\v emcov\'a~\cite{en1,en2}.
The basic spectral properties of the operator are well known:
the essential spectrum coincides with the semi-axis
$[-\alpha^2/4,+\infty)$, and the discrete spectrum is non-empty except in the degenerate cases when $\Gamma$ is a single ray ($N=1$) or a straight line ($N=2$ and $|\theta_1-\theta_2|=\pi$).
Despite the simple geometrical picture, the only available proof of the existence of eigenvalues
is based on a rather involved analysis of integral operators carried out by Exner and Ichinose~\cite{ei}. On the other hand, by the min-max principle,
the non-emptiness of the discrete spectrum would follow from the existence
of a trial function $v\in H^1(\RR^2)$ satisfying the strict inequality
\begin{equation}
  \label{eq-1}
Q_{\Gamma,\alpha}(v)< -\dfrac{\alpha^2}{4}\|v\|^2_{L^2(\RR^2)}.
\end{equation}
Surprisingly, the construction of such a function appeared to be a difficult task.
The construction of Exner and N\v emcov\'a~\cite{en2} works only if there
is a pair of rays with $|\theta_j-\theta_k|\mod 2\pi <0.092$. Brown, Eastham and Wood \cite{bew1,bew2,eastham}
managed to find a trial function for all possible configurations with $N\ge 3$
as well as for the configurations with $N=2$ and $|\theta_1-\theta_2|<0.9271$.
In the present note we show how to construct such a function
for all possible cases (Theorem~\ref{thm1}), and our approach uses
a likeliness between the star graphs and a spectral problem of the surface superconductivity
with a similar geometry
discussed by Lu and Pan~\cite{lupan} and Helffer and Morame~\cite{hm}.
We remark again that Theorem~\ref{thm1} itself does not provide any new spectral information,
but suggests a new method to show the presence of a non-empty discrete spectrum
as an alternative to the analytical proof by Exner and Ichinose~\cite{ei}. 
On the other hand, the presence of explicitly given
trial functions allows one to obtain a universal upper
bound for the lowest eigenvalue (Theorem~\ref{thm2}), which is a new result.

\section{Construction of a trial function}

By the min-max principle, it is sufficient to consider the case $N=2$ (a broken line),
then, up to isometries, all possible configurations can be described
by a single parameter $\theta\in(0,\pi/2)$
through $\Gamma=\overline{\mathstrut\Gamma_+\cup\Gamma_-}$ with $\Gamma_\pm:=\big\{(t,\pm t \tan\theta):t>0\big\}$,
and the associated operator $H_{\Gamma,\alpha}$ will be denoted by $H(\theta,\alpha)$.

We remark first that in order to show that the discrete spectrum is non-empty
it is sufficient to consider the problem in the half-plane $\RR\times\RR_+$, i.e. to find a function $u\in H^1(\RR\times\RR_+)$ satisfying
\[
\iint_{\RR\times\RR_+}|\nabla u|^2dx -\alpha\int_{\Gamma_+} |u|^2dx < -\dfrac{\alpha^2}{4}\,
\|u\|^2_{L^2(\RR\times\RR_+)},
\]
as its extension $v$ to the whole of $\RR^2$ by parity 
automatically satisfies \eqref{eq-1}. For subsequent constructions,
it is handy to perform an additional rotation to put the support
of the interaction onto the positive semi-axis of ordinates.
In other words, we will work with the domain
$\Omega:=\big\{(x_1,x_2): x_1< x_2 \tan \theta \big\}$
and the quadratic form
\[
Q(u)=\iint_{\Omega} |\nabla u|^2 dx - \alpha\int_{\RR_+} \big|u(0,x_2)\big|^2dx_2,
\quad u\in H^1(\Omega).
\]

\begin{theorem}\label{thm1}
Pick any $\rho\in (0,\cot^2\theta)$ and any Lipschitz function $\chi:\RR\to[0,1]$ with $\chi(t)=1$
for $|t|\le 1$ and $\chi(t)=0$ for $|t|\ge 2$, then
for sufficiently large $n>0$ the function $u$ defined by
\begin{equation}
 \label{eq-u}
u(x_1,x_2)=e^{-\alpha|x_1|/2}
\bigg(
\dfrac{2}{\alpha} \,\one(x_2) -\dfrac{1}{\alpha}\,e^{-\alpha|x_2|\tan\theta}\sgn x_2
\bigg)^\rho \chi\Big(\dfrac{x_2}{n}\Big)
\end{equation}
satisfies the strict inequality $Q(u)<-\dfrac{\alpha^2}{4}\, \|u\|^2_{L^2(\Omega)}$.
\end{theorem}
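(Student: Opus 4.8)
The plan is to show that, with $u$ given by \eqref{eq-u}, the quantity $Q(u)+\tfrac{\alpha^2}{4}\|u\|^2_{L^2(\Omega)}$ becomes strictly negative once $n$ is large. The natural approach is to separate the roles of the two variables. In the $x_1$-direction the factor $e^{-\alpha|x_1|/2}$ is exactly the ground-state profile of the one-dimensional $\delta$-interaction $-\partial_{x_1}^2-\alpha\delta_0$ on $\RR$, whose lowest eigenvalue is $-\alpha^2/4$; so one expects the $x_1$-integration to produce precisely the threshold $-\alpha^2/4$ together with a negative remainder coming from the transverse ($x_2$) variation. I would therefore write $u(x_1,x_2)=e^{-\alpha|x_1|/2}\,\psi(x_2)$ on the region where $x_1$ ranges over all of $\RR$ for fixed $x_2$; but one must be careful, because $\Omega=\{x_1<x_2\tan\theta\}$ is a half-plane tilted about the origin, so for $x_2>0$ the variable $x_1$ runs over $(-\infty,x_2\tan\theta)$ and for $x_2<0$ over $(-\infty,x_2\tan\theta)$ with $x_2\tan\theta<0$. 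The clever choice of the inner profile $\tfrac{2}{\alpha}\one(x_2)-\tfrac1\alpha e^{-\alpha|x_2|\tan\theta}\sgn x_2$ is designed exactly so that, after integrating $e^{-\alpha|x_1|/2}$ against the relevant half-line, one recovers a clean expression: note that $\int_{-\infty}^{a}e^{-\alpha|x_1|}dx_1$ equals $\tfrac2\alpha-\tfrac1\alpha e^{-\alpha a}$ for $a>0$ and $\tfrac1\alpha e^{\alpha a}$ for $a<0$, i.e. it is $\tfrac2\alpha\one(x_2)-\tfrac1\alpha e^{-\alpha|x_2|\tan\theta}\sgn x_2$ evaluated at $a=x_2\tan\theta$. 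So that profile, call it $g(x_2)^{1/1}$, is (up to the power $\rho$) the $L^2$-normalization weight of the transverse slices.

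**Carrying out the reduction**

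Concretely, I would compute the three pieces of $Q(u)+\tfrac{\alpha^2}{4}\|u\|^2$. First, $\iint_\Omega|\partial_{x_1}u|^2\,dx-\tfrac{\alpha^2}{4}\iint_\Omega|u|^2\,dx$: fixing $x_2$ and integrating in $x_1$ over the appropriate half-line, the combination $\int(\partial_{x_1}e^{-\alpha|x_1|/2})^2-\tfrac{\alpha^2}{4}\int(e^{-\alpha|x_1|/2})^2$ would vanish identically if $x_1$ ran over all of $\RR$, but here it picks up a boundary term at $x_1=x_2\tan\theta$ from the integration by parts $\int(\phi')^2=-\int\phi\phi''+[\phi\phi']$, namely $-\tfrac\alpha2 e^{-\alpha|x_2|\tan\theta}\,\psi(x_2)^2$ (with the sign depending on whether $x_2\tan\theta$ is positive or negative, hence the $\sgn$). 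Second, the interaction term $-\alpha\int_{\RR_+}|u(0,x_2)|^2\,dx_2$ contributes $-\alpha$ times $g(x_2)^{2\rho}$ integrated against $\chi(x_2/n)^2$ over $x_2>0$, since at $x_1=0$ the factor $e^{-\alpha|x_1|/2}=1$. Third, $\iint_\Omega|\partial_{x_2}u|^2\,dx$ is the "cost" term: it is $O(1)$ from the $x_2$-derivative hitting $g^\rho$ (concentrated near $x_2\approx0$, integrable and $n$-independent to leading order) plus an $O(1/n)$ error from the derivative hitting the cutoff $\chi(x_2/n)$, which can be made arbitrarily small. Assembling these, after the $x_1$-integration the whole expression collapses to a one-dimensional functional in $x_2$: schematically, a combination of boundary/interaction terms proportional to $\int g^{2\rho-1}(\dots)$ plus $\rho^2\int g^{2\rho-2}(g')^2$ plus the vanishing cutoff error. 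The key cancellation to watch is that the boundary term from the $x_1$-integration and the $-\alpha$ interaction term combine, on $x_2>0$, into something that is pointwise controlled, leaving a genuinely negative leading contribution, while the Dirichlet energy in $x_2$ is the only positive term and is uniformly bounded.

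**The main obstacle and the role of $\rho$**

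The delicate point — and where the hypothesis $\rho\in(0,\cot^2\theta)$ enters — is showing that the positive transverse Dirichlet energy $\rho^2\iint_\Omega g(x_2)^{2\rho-2}(\partial_{x_2}g(x_2))^2\,e^{-\alpha|x_1|}\,dx\,$ is finite (in particular that the integrand is integrable near $x_2=0$, where $g$ vanishes like $|x_2|$ for $x_2\downarrow0^-$... actually $g$ has a jump-type behavior and $g^{2\rho}$ must be integrable near $0$ from below) and, more importantly, that it is strictly smaller than the gain coming from the negative terms. Since $g(x_2)\sim c|x_2|$ as $x_2\to0^-$ and is bounded below by a positive constant for $x_2>0$, the factor $g^{2\rho-2}(g')^2\sim c'|x_2|^{2\rho-2}$ is integrable near $0$ precisely when $\rho>0$, but one also needs $\rho$ small enough that this energy does not overwhelm the negative interaction gain; the threshold $\rho<\cot^2\theta$ is exactly the borderline at which the quadratic-in-$\rho$ positive term is beaten by the linear-in-$\rho$ negative term. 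So the heart of the estimate is a one-variable inequality: after the reduction, I expect to arrive at a bound of the form $Q(u)+\tfrac{\alpha^2}{4}\|u\|^2 \le \rho\,A - \rho^2 B + \varepsilon(n)$ with explicit $A>0$, $B>0$ satisfying $A/B$ related to $\cot^2\theta$ and $\varepsilon(n)\to0$, which is negative for the admissible $\rho$ once $n$ is large. I would close the argument by making the constants $A$, $B$ explicit through the elementary integrals $\int_0^\infty e^{-2\alpha x_2\tan\theta}(\dots)\,dx_2$ and by choosing $n$ large enough to absorb the cutoff error; verifying the sign of $A$ and the precise value of the ratio $A/B$ is the one genuinely computational step that cannot be shortcut.
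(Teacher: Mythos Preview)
Your outline is essentially the paper's proof: write $u=e^{-\alpha|x_1|/2}g(x_2)$, reduce $Q(u)+\tfrac{\alpha^2}{4}\|u\|^2$ to a one-dimensional functional $R(g)$, show $R(g_\rho)<0$ for $g_\rho=F^\rho$ with $F(t)=\int_{-\infty}^t e^{-\alpha|x_1|}dx_1$, and absorb the cutoff error. Two small corrections and one missing ingredient.

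First, the profile $F(x_2\tan\theta)$ is continuous and strictly positive on all of $\RR$: it equals $1/\alpha$ at $x_2=0$ and tends to $0$ only as $x_2\to-\infty$. There is no jump and no vanishing near the origin, so your worry about integrability of $g^{2\rho-2}(g')^2$ there is misplaced; the lower bound $\rho>0$ is needed only for the \emph{sign} of $R(g_\rho)$, not for any convergence issue. Second, your displayed schematic $\rho A-\rho^2 B$ has the signs reversed relative to your own (correct) verbal description ``quadratic positive term beaten by linear negative term'': the actual result is
\[
R(g_\rho)=\rho\tan^2\theta\,(\rho-\cot^2\theta)\int_\RR e^{-2\alpha|x_2|\tan\theta}F(x_2\tan\theta)^{2\rho-1}\,dx_2,
\]
i.e.\ of the form $\rho^2 B-\rho A$ with $A/B=\cot^2\theta$, negative precisely on $(0,\cot^2\theta)$.

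The step you flagged as ``the one genuinely computational step that cannot be shortcut'' is in fact a one-line trick: after the $x_1$-integration you are left with $R(g)=\int g'^2 F-\tfrac{\alpha}{2}\int g^2 e^{-\alpha|x_2|\tan\theta}\sgn x_2\,dx_2$, and a single integration by parts in $x_2$ converts the second integral into $\cot\theta\int g g' F'(x_2\tan\theta)\,dx_2$, giving
\[
R(g)=\int_\RR g'(x_2)\Big(g'(x_2)F(x_2\tan\theta)-g(x_2)F'(x_2\tan\theta)\cot\theta\Big)\,dx_2.
\]
Substituting $g=F^\rho$ then makes the bracket equal to $(\rho-\cot^2\theta)\tan\theta\,F^\rho F'$ and the factorization above drops out immediately, with no further computation needed.
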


\begin{proof}
For futher use, denote
\[
F(t):=\int_{-\infty}^t e^{-\alpha|x_1|} \,dx_1= \dfrac{2}{\alpha} \,\one(t) -\dfrac{1}{\alpha}e^{-\alpha|t|}\sgn t.
\]
For the functions $u$ of the form $u(x_1,x_2)=e^{-\alpha |x_1|/2} g(x_2)$ with real-valued $g$ we have
\begin{equation}
   \label{eq-a}
\|u\|^2_{L^2(\Omega)}=\int_\RR \int_{-\infty}^{x_2\tan\theta}e^{-\alpha|x_1|}g(x_2)^2dx_1dx_2=
\int_\RR g(x_2)^2 F(x_2\tan\theta)\,dx_2.
\end{equation}
Furthermore, 
\begin{multline*}
Q(u)=\dfrac{\alpha^2}{4}\int_\RR g(x_2)^2 \int_{-\infty}^{x_2\tan\theta}e^{-\alpha|x_1|}\,dx_1\,dx_2\\
+\int_\RR g'(x_2)^2 \int_{-\infty}^{x_2\tan\theta}e^{-\alpha|x_1|}dx_1dx_2
-\alpha \int_{\RR_+}g(x_2)^2dx_2.
\end{multline*}
Due to
\begin{gather*}
\dfrac{\alpha^2}{4}\int_{-\infty}^{x_2 \tan\theta} e^{-\alpha|x_1|}dx_1 =\dfrac{\alpha^2}{4}\,F(x_2\tan \theta)=-\dfrac{\alpha^2}{4}\,F(x_2\tan \theta)
+\dfrac{\alpha^2}{2}\,F(x_2\tan \theta)\\
=-\dfrac{\alpha^2}{4}\,F(x_2\tan\theta)+\alpha \,\one(x_2)-\dfrac{\alpha}{2}\,e^{-\alpha|x_2|\tan\theta}\sgn x_2
\end{gather*}
we have
\begin{multline}
    \label{eq-b}
Q(u)=-\dfrac{\alpha^2}{4} \int_\RR g(x_2)^2F(x_2\tan \theta)\,dx_2+
\int_\RR g'(x_2)^2 F(x_2\tan \theta)\,dx_2\\
 -\dfrac{\alpha}{2}\int_\RR g(x_2)^2 e^{-\alpha|x_2|\tan\theta}\sgn x_2 \,dx_2.
\end{multline}
Using the integration by parts we obtain
\begin{equation}
   \label{eq-c}
\begin{gathered}
\int_\RR g(x_2)^2 e^{-\alpha|x_2|\tan\theta}\sgn x_2 \,dx_2
=\dfrac{2 }{\alpha}\,\cot\theta \int_\RR g(x_2)g'(x_2)e^{-\alpha|x_2|\tan\theta}dx_2\\
=\dfrac{2 }{\alpha}\,\cot\theta \int_\RR g(x_2)g'(x_2)F'(x_2\tan\theta)\,dx_2,
\end{gathered}
\end{equation}
and the substitution of \eqref{eq-a} and \eqref{eq-c} into \eqref{eq-b} gives the representation
\begin{align*}
Q(u)&=-\dfrac{\alpha^2}{4}\|u\|^2_{L^2(\Omega)}+R(g),\\
R(g)&:=\int_{\RR}g'(x_2) \Big( g'(x_2)F(x_2\tan\theta)- g(x_2)F'(x_2\tan\theta)\cot \theta\Big)\,dx_2.
\end{align*}
Hence, we need to find a function $g$ with $R(g)<0$.

Pick $\rho\in (0,\cot^2\theta)$ and introduce a function $g_\rho$ by
$g_\rho(x_2)=F(x_2\tan\theta)^\rho$, then
\[
R(g_\rho)=\rho \tan^2\theta ( \rho-\cot^2\theta)
\int_\RR e^{-2\alpha |x_2|\tan\theta} F(x_2\tan\theta)^{2\rho-1}dx_2<0.
\]
Remark that the integral is finite, but the function $g_\rho$
has a non-zero finite limit at $+\infty$, and the associated function
$u$ does not belong to $H^1(\Omega)$ due to~\eqref{eq-a}.

Choose a Lipschitz function $\chi:\RR\to[0,1]$ with $\chi(t)=1$ for $|t|\le 1$ and $\chi(t)=0$ for $|t|\ge 2$,
and for $n>0$ denote $h_n:=g_\rho \chi(\cdot/n)$. By construction,
the associated functions $u_n$ given by
\begin{equation}
     \label {eq-un}
u_n(x_1,x_2)=e^{-\alpha|x_1|/2}h_n(x_2),
\end{equation}
belong to $H^1(\Omega)$ and coincide with \eqref{eq-u}.
In addition,
\begin{gather*}
R(h_n)-R(g_\rho)\\
=\int_{\RR} \Big(\chi\Big(\dfrac{x_2}{n}\Big)^2-1\Big)g'_\rho(x_2)\Big( g'_\rho(x_2)F(x_2\tan\theta)- g_\rho(x_2)F'(x_2\tan\theta)\cot \theta\Big)dx_2\\
+ \dfrac{1}{n}\int_{\RR} \chi\Big(\dfrac{x_2}{n}\Big) \chi'\Big(\dfrac{x_2}{n}\Big)\Big( 2g_\rho(x_2)g'_\rho(x_2)F(x_2\tan\theta)- g_\rho(x_2)^2F'(x_2\tan\theta)\cot \theta\Big)dx_2\\
+ \dfrac{1}{n^2}\int_{\RR} \chi'\Big(\dfrac{x_2}{n}\Big)^2g_\rho(x_2)^2 F(x_2\tan\theta)\,dx_2=:I_1+I_2+I_3.
\end{gather*}
Due to the finiteness of $R(g_\rho)$, for large $n$ we have
\begin{multline*}
|I_1|\le \int_{\RR\setminus(-n,n)}\Big|g'_\rho(x_2)\Big( g'_\rho(x_2)F(x_2\tan\theta)- g_\rho(x_2)F'(x_2\tan\theta)\cot \theta\Big)\Big|dx_2\\
=\rho \tan^2\theta \cdot \big| \rho-\cot^2\theta\big|\int_{\RR\setminus(-n,n)} e^{-2\alpha |x_2|\tan\theta} F(x_2\tan\theta)^{2\rho-1}dx_2=o(1).
\end{multline*}
Furthermore, 
\begin{align*}
|I_2| &=\bigg|\dfrac{1}{n}\int_{\RR}\big(2\rho \tan\theta -\cot\theta\big) \chi\Big(\dfrac{x_2}{n}\Big)
\chi'\Big(\dfrac{x_2}{n}\Big)e^{-\alpha|x_2|\tan\theta}F(x_2\tan\theta)^{2\rho}\,dx_2\bigg|\\
&\le\dfrac{|2\rho -\cot^2\theta|\cdot \tan\theta\cdot\|\chi'\|_\infty}{n}
\int_{\RR} e^{-\alpha|x_2|\tan\theta}F(x_2\tan\theta)^{2\rho}\,dx_2 =\cO\Big(\dfrac{1}{n}\Big)
\end{align*}
due to the convergence of the integral.
Finally, as the integrand is bounded, we have
\[
|I_3|\le \dfrac{1}{n^2} \bigg(\int_{-2n}^{-n} +\int_{n}^{2n} \|\chi'\|_\infty^2\, g_\rho(x_2)^2F(x_2\tan\theta)\,dx_2\bigg)= \dfrac{1}{n^2} \cdot \cO(n)
=\cO\Big(\dfrac{1}{n}\Big),
\]
and we arrive at $R(h_n)=R(g_\rho)+o(1)$ as $n$ tends to $+\infty$.
As $R(g_\rho)<0$, we have $R(h_n)<0$ for large $n$,
which shows that the functions \eqref{eq-un} have the sought property.
\end{proof}

\section{Upper bound for the lowest eigenvalue}

We remark first that various estimates for the lowest eigenvalue $\lambda(\theta,\alpha)$
of $H(\theta,\alpha)$ were obtained in earlier works. In particular, Duch\^ene and Raymond \cite{dr} showed
that
\begin{equation}
  \label{eq-dr}
\lambda(\theta,\alpha)=-\alpha^2\Big[ 1-c_1\theta^{2/3}+\cO(\theta)\Big], \quad \theta\to 0+,
\end{equation}
and Exner and Kondej \cite{ek} proved that
\begin{equation}
  \label{eq-ek}
\lambda(\theta,\alpha)=-\alpha^2\Big[ \dfrac{1}{4} + c_2 \big(\dfrac{\pi}{2}-\theta\big)^4+o\Big(\big(\dfrac{\pi}{2}-\theta\big)^4\Big)\Big],
\quad \theta\to \dfrac{\pi}{2}\,-,
\end{equation}
where $c_1$ and $c_2$ are some explicit positive constants.

Recall that by the min-max principle there holds
$\lambda(\theta,\alpha)\le Q(v)/\|v\|^2_{L^2(\Omega)}$
for any non-zero $v\in H^1(\Omega)$. We would like to use
the trial functions $u$ from Theorem~\ref{thm1}
to obtain an explicit upper estimate for the eigenvalue
valid for all values of~$\theta$. As the limit
$\lim_{n\to+\infty }Q(u)/\|u\|^2_{L^2(\Omega)}=-\alpha^2/4$ coincides with the bottom of the essential
spectrum, we cannot hope for the best possible result.
Nevertheless, the estimate  and the method can be of some interest as,
to our best knowledge, no analogous bound has been available so far.

\begin{theorem}\label{thm2}
For any $\theta\in\Big(0,\dfrac{\pi}{2}\Big)$ there holds
$\lambda(\theta,\alpha)\le -\alpha^2\Big( \dfrac{1}{4} + \Lambda(\theta)\Big)$,
where
\begin{gather}
    \label{eq-lambda}
\Lambda(\theta):=
\dfrac{3\cos^6\theta\big(2^{2\cos^2\theta}-1\big)^2}{2\,\big(1+2\cos^2\theta\big)^3\big(108+180 \cos^2\theta-132\cos^4\theta +45 \cos^6\theta - 5 \cos^8\theta\big)}
\end{gather}
is strictly positive.
\end{theorem}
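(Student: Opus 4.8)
The plan is to put the trial functions of Theorem~\ref{thm1} into the min-max principle and optimise over their parameters. By the dilation $x\mapsto x/\alpha$, which preserves~$\Gamma$, one has $\lambda(\theta,\alpha)=\alpha^2\lambda(\theta,1)$, so it is enough to take $\alpha=1$ and prove $\lambda(\theta,1)\le-\tfrac14-\Lambda(\theta)$. By the identity derived in the proof of Theorem~\ref{thm1}, for $u(x_1,x_2)=e^{-|x_1|/2}g(x_2)$ one has $Q(u)=-\tfrac14\|u\|^2_{L^2(\Omega)}+R(g)$; hence, by min-max, it remains to produce an admissible real $g$ (i.e.\ with $u\in H^1(\Omega)$) for which $R(g)\le-\Lambda(\theta)\,\|u\|^2_{L^2(\Omega)}$.

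First I would fix $\rho:=\cos^2\theta$, which lies in $(0,\cot^2\theta)$ for $\theta\in(0,\tfrac\pi2)$, so that Theorem~\ref{thm1} is available with this~$\rho$. With $F(t)=2\,\one(t)-e^{-|t|}\sgn t$ and $g_\rho=F(\cdot\tan\theta)^\rho$, the integral occurring in the proof of Theorem~\ref{thm1} becomes elementary after the substitutions $s=x_2\tan\theta$ and, on $\{s>0\}$, $w=e^{-s}$, and it equals
\[
R(g_\rho)=\tan\theta\,(\rho-\cot^2\theta)\,\frac{2^{2\rho}-1}{2\rho+1}
=-\frac{\cos^2\theta\,\bigl(2^{2\cos^2\theta}-1\bigr)}{1+2\cos^2\theta}\,\cot\theta<0,
\]
which already produces the factors $\cos^2\theta$, $2^{2\cos^2\theta}-1$ and $1+2\cos^2\theta$ from~\eqref{eq-lambda}. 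Since $g_\rho$ fails to be admissible only because $F(\cdot\tan\theta)^{2\rho+1}$ does not decay at $+\infty$, I would restore admissibility with the cut-off trial functions $h_n=g_\rho\,\chi(\cdot/n)$ of Theorem~\ref{thm1} (with a concrete piecewise-linear~$\chi$), keeping $n>0$ as a free parameter.

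Next I would estimate $R(h_n)$ and $\|u_n\|^2_{L^2(\Omega)}$. After the same substitutions all integrals are elementary, reducing to $\int_1^2 v^{\,2\rho+k}\,dv=\dfrac{2^{\,2\rho+k+1}-1}{2\rho+k+1}$ for a few small integers $k$ and to $\int_{-\infty}^0 e^{(2\rho+k)x_2\tan\theta}\,dx_2=\dfrac{\cot\theta}{2\rho+k}$; in particular $\int_{\RR}e^{-2|x_2|\tan\theta}F(x_2\tan\theta)^m\,dx_2=\cot\theta\cdot\dfrac{2(2^{m+1}-1)}{(m+1)(m+2)}$. Using $F(\cdot\tan\theta)\le2$ on $\{x_2>0\}$ together with the bounds for the cut-off remainders $I_1,I_2,I_3$ already isolated in the proof of Theorem~\ref{thm1}, one obtains clean estimates $R(h_n)\le R(g_\rho)+b_1(\theta)/n$ and $\|u_n\|^2_{L^2(\Omega)}\le b_0(\theta)+b_2(\theta)\,n$ with explicit $b_i(\theta)$ (the exponentially small corrections only improving them). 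Minimising the M\"obius-type quotient $\bigl(R(g_\rho)+b_1/n\bigr)/\bigl(b_0+b_2 n\bigr)$ over $n>0$ is then a one-variable calculus exercise whose optimal value, after the denominators $2\rho+1,2\rho+2,2\rho+3$ are cleared, is $-\Lambda(\theta)$ with $\Lambda$ as in~\eqref{eq-lambda}: the cube $(1+2\cos^2\theta)^3$ and the quartic $108+180\cos^2\theta-132\cos^4\theta+45\cos^6\theta-5\cos^8\theta$ are precisely the residue of this clearing and of the square root produced at the optimum. A convenient check is the identity $\Lambda(\theta)=\dfrac{3(1-\cos^2\theta)}{2(1+2\cos^2\theta)\,P(\cos^2\theta)}\,R(g_\rho)^2$ with $P(c)=108+180c-132c^2+45c^3-5c^4$, obtained by substituting the value of $R(g_\rho)$.

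Finally, the positivity of $\Lambda(\theta)$ on $(0,\tfrac\pi2)$ is elementary: writing $c=\cos^2\theta\in(0,1)$, the numerator $3c^3(2^{2c}-1)^2$ and the factor $(1+2c)^3$ are strictly positive, while $P(c)=(108+180c-132c^2)+5c^3(9-c)>0$ on $[0,1]$ because $5c^3(9-c)\ge0$ there and the downward parabola $108+180c-132c^2$ is positive on $[0,1]$ (its minimum on $[0,1]$ is attained at an endpoint and equals $108$); alternatively, positivity is immediate from the displayed identity, since $R(g_\rho)\neq0$ and the rational prefactor is positive. The delicate step---and the main obstacle---is the middle one: keeping the several $\theta$-dependent coefficients $b_i(\theta)$ under control and verifying that, after the optimisation, the transcendental contributions assemble exactly into $(2^{2\cos^2\theta}-1)^2$ over the stated rational function. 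One should not expect the bound to be sharp, since $R(h_n)/\|u_n\|^2_{L^2(\Omega)}\to0$ as $n\to\infty$ forces the construction to improve only strictly on the threshold $-\tfrac14$.
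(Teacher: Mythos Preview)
Your overall plan coincides with the paper's: fix $\rho=\cos^2\theta$, take the piecewise-linear cut-off $\chi$, compute $R(g_\rho)$ exactly (your formula agrees with the paper's), bound the remainders $I_1,I_2,I_3$ and $\|u_n\|^2$, and optimise over~$n$. Your structural identity $\Lambda(\theta)=\dfrac{3\sin^2\theta}{2(1+2\cos^2\theta)P(\cos^2\theta)}\,R(g_\rho)^2$ is also correct and matches the paper's final step $\mu\ge a^2/(4bc)$.

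Where the proposal goes wrong is in the provenance of the quartic $P(c)=108+180c-132c^2+45c^3-5c^4$. It does \emph{not} come from ``clearing the denominators $2\rho+1,2\rho+2,2\rho+3$'' after exact integration, nor from a square root produced at the optimum. In the paper the exponential tails in $I_1$ are first converted to $\cO(1/n)$ terms via $e^{-x}\le 1/x$, and then every transcendental factor is replaced by a linear function of $\rho=\cos^2\theta$ through the deliberate convexity bounds
\[
\frac{1}{1+2\rho}\le 1-\tfrac{2}{3}\rho,\qquad
\frac{1}{(1+2\rho)^2}\le 1-\tfrac{8}{9}\rho,\qquad
2^{2\rho}\le 1+3\rho,
\]
together with the crude $|2\sin^2\theta-1|\le 1$ inside $I_2$. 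It is precisely this chain of ad hoc simplifications that manufactures the coefficients $108,180,-132,45,-5$; an exact evaluation of the elementary integrals you describe would yield a different (presumably sharper) expression, not the stated $\Lambda(\theta)$. Similarly, the paper uses only the rough bound $\|u_n\|^2_{L^2(\Omega)}\le c\,n$ (no additive constant $b_0$), so the optimisation is that of $(an-b)/(cn^2)$, with the rational optimiser $n=2b/a$ and value $a^2/(4bc)$: no M\"obius-type quotient and no square root. As written, your route would therefore not land on the formula~\eqref{eq-lambda}; to reproduce it you must insert exactly those rough inequalities rather than keep the integrals in closed form.
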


A comparison with \eqref{eq-dr} and \eqref{eq-ek} shows that the upper estimate is away of an optimal one.
For $\theta$ close to $0$ our estimate gives $\lambda(\theta,\alpha)\le -99\alpha^2/392+\cO(\theta)$
which is very weak when compared with the true behavior given by \eqref{eq-dr}.
At $\theta=\pi/2$, the value of $\Lambda(\theta)$ vanishes at the tenth order, which is also very
far from the true fourth order given in \eqref{eq-ek}.  Our bound resulted from various experiments
with the parameters and used a number of very rough inequalities, and the interested reader should
feel free to improve the estimate using an alternative choice of parameters.

\begin{proof}
The result is based on a more accurate estimate
of the quantities appearing in the proof of Theorem~\ref{thm1} for
an explicit choice of the function $\chi$ and of the parameter $\rho$.
Namely, we set
\[
\chi(t):=\begin{cases}
1, & |t|\le 1, \\
2-|t|, & |t|\in(1,2),\\
0, & |t|\ge 2,
\end{cases}
\qquad
\rho:=\cos^2\theta,
\]
then $\|\chi'\|_\infty=1$. We have
\begin{multline*}
R(g_\rho)=\dfrac{\rho \tan^2\theta ( \rho-\cot^2\theta)}{\alpha^{2\rho-1}}\\
\times \bigg(
\int_{-\infty}^0 e^{(2\rho+1)x_2\tan\theta}dx_2+ \int_0^{+\infty} e^{-2\alpha x_2\tan\theta}\big(2-e^{-\alpha x_2\tan\theta}\big)^{2\rho-1}dx_2
\bigg).
\end{multline*}
We calculate
\[
\int_{-\infty}^0 e^{(2\rho+1)x_2\tan\theta}dx_2= \dfrac{1}{(2\rho+1)\alpha \tan\theta}
\]
and, using the change of variables $s=e^{-\alpha x_2\tan\theta}$,
\begin{multline*}
\int_0^{+\infty} e^{-2\alpha x_2\tan\theta}\big(2-e^{-\alpha x_2\tan\theta}\big)^{2\rho-1}dx_2=\dfrac{1}{\alpha \tan\theta}\int_0^1 s(2-s)^{2\rho-1}ds\\
=\dfrac{1}{\alpha \tan\theta} 
\int_0^1 \Big(2(2-s)^{2\rho-1}-(2-s)^{2\rho} \Big)\,ds
=\dfrac{1}{\alpha \tan\theta} \Big(\,\dfrac{2^{2\rho}-1}{\rho} - \dfrac{2^{2\rho+1}-1}{2\rho+1}\Big),
\end{multline*}
which gives
\[
R(g_\rho)=\dfrac{1}{\alpha^{2\rho}} \dfrac{\tan\theta\,(\rho-\cot^2\theta)(2^{2\rho}-1)}{2\rho+1}=
-\dfrac{\cos^3\theta(2^{2\cos^2\theta}-1)}{\sin\theta (1+2\cos^2\theta)}.
\]

In what follows we will use the following estimates valid for $s\in[0,1]$
due to the convexity argument:
\[
\dfrac{1}{1+2s}\le1-\dfrac{2}{3}\, s, \quad
\dfrac{1}{(1+2s)^2}\le1-\dfrac{8}{9}\, s, \quad
2^{2s}\le 1+3s.
\]
We estimate
\begin{align*}
|I_1|&\le \dfrac{\cos^4\theta}{\alpha^{2\rho-1}} \bigg(
\int_{-\infty}^{-n}
e^{(2\rho+1)\alpha x_2\tan\theta}dx_2\\
&\qquad
+
\int_{n}^{+\infty} e^{-2\alpha x_2\tan\theta} \big(2-e^{-\alpha x_2\tan\theta}\,\big)^{2\rho-1}dx_2
\bigg)\\
&\le 
\dfrac{\cos^4\theta}{\alpha^{2\rho-1}} \,\bigg(
\int_{-\infty}^{-n}
e^{(2\rho+1)\alpha x_2\tan\theta}dx_2
+
2^{2\rho} \int_{n}^{+\infty} e^{-2\alpha x_2\tan\theta}dx_2
\bigg)\\
&=\dfrac{\cos^4\theta}{\alpha^{2\rho-1}} \,\bigg(
\dfrac{1}{(2\rho+1)\alpha \tan \theta} e^{-(2\rho+1)\alpha n\tan\theta}
+
\dfrac{2^{2\rho}}{2 \alpha \tan\theta} e^{-2\alpha n\tan\theta}
\bigg)\\
&\le
\dfrac{\cos^4\theta}{\alpha^{2\rho-1}} \,\bigg(
\dfrac{1}{\big((2\rho+1)\alpha \tan \theta\big)^2n}
+
\dfrac{2^{2\rho}}{(2 \alpha \tan\theta)^2 n}
\bigg)\\
&\le
\dfrac{1}{\alpha^{2\rho+1}}\cdot
\dfrac{\cos^6\theta}{\sin^2\theta} \,\Big(
\dfrac{1}{(2\rho+1)^2}
+
\dfrac{1}{4}\cdot 2^{2\rho}\Big)\cdot \dfrac{1}{n}\\
&\le
\dfrac{1}{\alpha^{2\rho+1}}\cdot
\dfrac{\cos^6\theta}{\sin^2\theta} \,\Big(
1-\dfrac{8}{9}\,\rho
+
\dfrac{1}{4} \,(1+3\rho)\Big)\cdot \dfrac{1}{n}\\
&=
\dfrac{1}{\alpha^{2\rho+1}}\cdot \dfrac{45 \cos^6\theta - 5 \cos^8\theta}{36 \sin^2\theta}\cdot \dfrac{1}{n}
\end{align*}
and
\begin{align*}
|I_2|&\le \dfrac{\big|2\cos^2 \theta -\cot^2\theta\big|\cdot\tan\theta}{n\alpha^{2\rho}}\Big(\int_{-\infty}^{\,0} e^{(2\rho+1)\alpha x_2\tan\theta}dx_2
+2^{2\rho}\int_0^\infty e^{-\alpha x_2\tan\theta}dx_2\Big)\\
&=\dfrac{|2\sin^2 -1|\cdot \cos\theta}{n\alpha^{2\rho}\sin\theta} \Big( \dfrac{1}{(2\rho+1)\alpha\tan\theta}
+\dfrac{2^{2\rho}}{\alpha \tan\theta}
\Big)\\
&=\dfrac{1}{\alpha^{2p+1}}\dfrac{\cos^2\theta \cdot \big|2\sin^2\theta-1\big|}{\sin^2\theta}
\Big(\,
\dfrac{1}{2\cos^2\theta+1}+2^{2\cos^2\theta}
\Big)\\
&\le 
\dfrac{1}{\alpha^{2p+1}}\dfrac{\cos^2\theta}{\sin^2\theta}
\Big(
1 -\dfrac{2}{3}\,\cos^2\theta + 1 +3\cos^2\theta
\Big)\\
&=\dfrac{1}{\alpha^{2p+1}}\cdot \dfrac{72 \cos^2\theta+ 84\cos^4\theta}{36 \sin^2\theta} \cdot \dfrac{1}{n} .
\end{align*}

Finally, the bounds $|F|\le 1/\alpha$ on $\RR_-$ and $|F|\le 2/\alpha$ on $\RR_+$ give
\begin{align*}
|I_3|&\le \dfrac{1}{n^2} \bigg(\int_{-2n}^{-n} +\int_{n}^{2n} g_\rho(x_2)^2F(x_2\tan\theta)\,dx_2\bigg)\\
&\le
\dfrac{1}{n^2}\bigg( \Big(\dfrac{1}{\alpha}\Big)^{2\rho+1}n + \Big(\dfrac{2}{\alpha}\Big)^{2\rho+1}n\bigg)
= \dfrac{1}{\alpha^{2\rho+1}} \Big(1+2^{2\rho+1}\Big) \cdot \dfrac{1}{n}\\
&\le \dfrac{1}{\alpha^{2\rho+1}} \Big(1+2 \big(1+3\rho\big)\Big) \cdot \dfrac{1}{n}
=\dfrac{1}{\alpha^{2\rho+1}} \cdot \Big(\,3+6\cos^2\theta\,\Big) \cdot \dfrac{1}{n}\\
&=\dfrac{1}{\alpha^{2\rho+1}} \cdot \dfrac{108 \sin^2\theta + 216 \sin^2\theta\cos^2\theta}{36 \sin^2\theta} \cdot \dfrac{1}{n}.
\end{align*}
As a result, we obtain
\begin{multline*}
R(h_n)\le R(g_\rho)+|I_1|+|I_2| + |I_3|\le -\Big(a - \dfrac{b}{n}\Big),\\
\begin{aligned}
a&:=-R(g_\rho), \quad b:= \dfrac{1}{\alpha^{2\rho+1}}\cdot \dfrac{B}{36 \sin^2\theta},\\[\smallskipamount]
B&:=108 \sin^2\theta + 72 \cos^2\theta+ \big(84\cos^2\theta +216 \sin^2\theta\big)\cos^2\theta+45 \cos^6\theta - 5 \cos^8\theta\\
&\phantom{:}=108-36\cos^2\theta + \big(216-132\cos^2\theta\big)\cos^2\theta+45 \cos^6\theta - 5 \cos^8\theta\\
&\phantom{:}=108+180 \cos^2\theta-132\cos^4\theta +45 \cos^6\theta - 5 \cos^8\theta,
\end{aligned}
\end{multline*}
implying
\[
Q(u)+ \dfrac{\alpha^2}{4}\,\|u\|^2_{L^2(\Omega)}\le R(h_n)\le -\Big(a-\dfrac{b}{n}\Big).
\]
On the other hand,
\begin{align*}
\|u\|^2_{L^2(\Omega)}&\le \int_{-2n}^{2n} g_\rho(x_2)^2 F(x_2\tan\theta)\,dx_2\\
&= \int_{-2n}^{0} F(x_2\tan\theta)^{2\rho+1}dx_2
+ \int_{0}^{2n} F(x_2\tan\theta)^{2\rho+1}dx_2\\
&\le
2n\Big( \dfrac{1}{\alpha}\Big)^{2\rho+1}+2n\Big( \dfrac{2}{\alpha}\Big)^{2\rho+1}
\le \dfrac{1}{\alpha^{2\rho+1}} \cdot \Big(2+ 4\cdot 2^{2\rho}\Big)\cdot n\le cn\\
\text{with} \quad c&:=\dfrac{6(1+2\cos^2\theta)}{\alpha^{2p+1}},
\end{align*}
and we have
\[
\mu(\theta,\alpha):=-\dfrac{\alpha^2}{4}-\lambda(\theta,\alpha)
\ge  \dfrac{an-b}{c n^2} \quad
\text{provided } \quad an >b.
\]
The right-hand side is optimized by $n=2b/a$ resulting in
$\mu(\theta,\alpha)\ge \dfrac{a^2}{4 bc}=\alpha^2 \Lambda(\theta)$
with $\Lambda(\theta)$ given in \eqref{eq-lambda}.
\end{proof}

\end{document}